\documentclass[letterpaper, 10 pt, conference]{ieeeconf}  

\IEEEoverridecommandlockouts                              

\usepackage{cite}
\usepackage{amsmath,amssymb,amsfonts,amsthm}

\usepackage{etoolbox}

\usepackage{algpseudocode}
\usepackage[linesnumbered, ruled, vlined, noend]{algorithm2e}
\SetAlFnt{\footnotesize}
\SetAlCapFnt{\footnotesize}
\SetAlCapNameFnt{\footnotesize}
\SetArgSty{textrm}
\SetInd{0.1em}{0.5em}
\SetCommentSty{emph}
\SetAlgoSkip{}
\makeatletter
\patchcmd{\@algocf@start}
  {-1.5em}
  {0pt}
  {}{}
  
\usepackage{graphicx}
\usepackage[export]{adjustbox}
\usepackage{textcomp}
\usepackage[table]{xcolor}
\usepackage{tikz}
\usetikzlibrary{arrows.meta,calc,positioning,fit,bending,backgrounds}
\usepackage{booktabs}
\usepackage{multirow}

\newtheorem{theorem}{Theorem}

\newtheorem{lemma}[theorem]{Lemma}
\newtheorem{proposition}[theorem]{Proposition}

\def\BibTeX{{\rm B\kern-.05em{\sc i\kern-.025em b}\kern-.08em
    T\kern-.1667em\lower.7ex\hbox{E}\kern-.125emX}}

\allowdisplaybreaks
\usepackage{setspace}

\newcommand{\SoC}{\mathrm{SoC}}
    
\begin{document}

\title{\LARGE \bf
Goal Staying Makes Sum-of-Costs\\ Anonymous Multi-Agent Path Finding NP-Hard
}

\author{Hang Ma, \textit{Simon Fraser University}, \texttt{hangma@sfu.ca}}

\maketitle
\thispagestyle{empty}
\pagestyle{empty}

\begin{abstract} Anonymous Multi-Agent Path Finding (AMAPF) admits polynomial-time network-flow algorithms for several objectives, including makespan, total distance, and sum-of-costs (SoC) when agents disappear upon reaching goals. We show that standard goal-staying AMAPF is fundamentally different. We first formulate SoC minimization by augmenting the standard time-expanded flow model with goal-settlement constraints and show that the resulting linear programming relaxation is non-integral. We then prove that minimizing SoC in goal-staying AMAPF is NP-hard via a reduction from 3-SAT. Together with the polynomial-time result for the disappearing variant, this establishes a sharp complexity boundary determined by whether completed agents remain at their goals. \end{abstract}

\section{Introduction}
\label{sec:intro}

Multi-Agent Path Finding (MAPF)~\cite{SternEtAl2019}, also called graph-based multi-robot path planning~\cite{ma2022graph}, aims to compute collision-free paths
for multiple agents from their start vertices to their goal vertices
on a graph.
In standard MAPF, each agent has a preassigned goal.
In \emph{Anonymous MAPF} (AMAPF), also called unlabeled,
goal-invariant, or permutation-invariant MAPF, the agents are
interchangeable and any agent can be assigned to any goal.
Equivalently, AMAPF is the one-team special case of Target Assignment
and Path Finding (TAPF)~\cite{MaKoenig2016}.

Full interchangeability leads to a striking reduction in computational
complexity.
While optimizing standard MAPF is NP-hard for common time-based objectives such
as makespan and sum-of-costs (SoC), AMAPF with the makespan objective
can be solved in polynomial time using single-commodity maximum flow
on a time-expanded network~\cite{YuLaValle2013Flow,AliYakovlev2024}. Here, SoC is the sum of completion times of all agents and thus counts both moves and waits before completion.
Minimum total traveled distance is also polynomial by assigning costs to the corresponding flow network~\cite{YuLaValle2013Flow}. This tractability has made network flow the natural algorithmic foundation for optimally solving several important AMAPF variants.

The situation for SoC is surprisingly different.
A standard modeling choice in MAPF is what happens after an agent
completes its path at a goal: it may either \emph{stay at the goal} and continue
participating in collisions, or \emph{disappear at the goal} and cease
to occupy the graph~\cite{SternEtAl2019}.
For disappearing AMAPF, SoC can be minimized in polynomial time using
minimum-cost maximum flow: a unit of flow simply leaves the
time-expanded network when it reaches a goal
\cite{YuLaValle2013Flow,AliYakovlev2026}.
Standard AMAPF instead uses goal-staying semantics.
For this setting, existing optimal approaches have exponential
worst-case complexity, including CBS with task assignment
and its variants~\cite{HonigEtAl2018,tang2023solving}, and no polynomial-time algorithm is
known~\cite{AliYakovlev2024,AliYakovlev2026}.
This leaves a long-standing complexity gap.
\emph{Why can flow efficiently optimize AMAPF makespan, total distance, and
disappearing-agent SoC, but not standard goal-staying SoC?
More fundamentally, is the latter problem polynomial-time
solvable at all?}

In this paper, we answer this question by showing that
\boxed{\emph{\textbf{goal staying makes optimal AMAPF-SoC NP-hard}}}.
The key distinction is that reaching a goal does not merely terminate
a unit of flow.
Once an agent completes its path, it permanently reserves its goal
vertex, coupling its completion decision with the motion of all agents
at future timesteps.

Our development starts from this observation.
We first augment the standard single-commodity time-expanded-flow
formulation with \emph{goal-settlement} variables that explicitly
model permanent goal occupancy.
The resulting formulation reveals why the usual network-flow
integrality argument no longer applies: its natural LP relaxation is
not integral.
This observation also provides intuition for our NP-hardness
construction, which uses permanent goal occupancy to encode mutually
exclusive choices and enforce Boolean consistency.

Our main contributions are:
\begin{enumerate}
    \item we identify permanent goal occupancy as the key constraint that
distinguishes standard AMAPF-SoC from polynomial flow-based AMAPF
variants and formulate it explicitly in a compact integer program;
    \item we show that the natural LP relaxation of this formulation is
non-integral,including an explicit instance with an integrality gap of $7/6$; and
    \item most importantly, we prove that minimizing SoC for standard goal-staying
AMAPF is NP-hard. Together with the polynomial-time result for disappearing AMAPF-SoC~\cite{YuLaValle2013Flow,AliYakovlev2026}, our result establishes a sharp
complexity boundary based solely on the behavior of agents after
reaching their goals.
\end{enumerate}

\begin{table*}[t]
\centering
\scriptsize
\caption{
Representative complexity results for graph-based MAPF and its anonymous and colored variants. 
}
\label{tab:mapf-complexity}
\setlength{\tabcolsep}{2.6pt}
\renewcommand{\arraystretch}{1.13}
\begin{tabular}{p{3.09cm}p{1.83cm}p{3.44cm}p{1.3cm}p{7.2cm}}
\hline
\textbf{Problem} &
\textbf{Graph} &
\textbf{Objective / Question} &
\textbf{Complexity} &
\textbf{Reference / Note} \\
\hline

Labeled MAPF &
general &
feasibility &
P &
pebble-motion \cite{KornhauserEtAl1984},\cite{YuRus2015}
\\

Labeled MAPF &
general &
makespan &
NP-hard &
\cite{Surynek2010}, $(4/3-\epsilon)$-inapprox. \cite{MaEtAl2016}
\\

Labeled MAPF &
general &
SoC, total/max distance &
NP-hard &
\cite{YuLaValle2016}
\\

Labeled MAPF &
directed &
feasibility &
NP-complete &
\cite{Nebel2024}, P for strongly connected directed \cite{ArdizzoniEtAl2025}
\\
\hline

Anonymous MAPF &
general &
makespan, total distance &
P &
\cite{YuLaValle2013Flow}
\\

Anonymous MAPF (disappear) &
general &
SoC &
P &
\cite{AliYakovlev2026}
\\

\rowcolor{gray!25}
{Anonymous MAPF} &
{general} &
{SoC} &
\textbf{NP-hard} &
\textbf{this work}
\\
\hline

2-colored MAPF / TAPF &
general &
makespan &
NP-hard &
\cite{YuLaValle2016}, $(4/3-\epsilon)$-inapprox. \cite{MaEtAl2016}, $(3/2-\epsilon)$-inapprox. also directed \cite{tan2025inapproximability}
\\

2-colored MAPF / TAPF &
general &
SoC &
NP-hard &
\cite{YuLaValle2016}, even when edge conflicts allowed \cite{MaEtAl2016}
\\
\hline

Labeled MAPF &
planar &
makespan, SoC,
total/max distance &
NP-hard &
\cite{Yu2016Planar}
\\

2-colored MAPF / TAPF &
planar &
makespan, SoC &
NP-hard &
\cite{Yu2016Planar}
\\
\hline

Labeled MAPF &
2D grid with holes &
makespan, SoC &
NP-hard &
\cite{BanfiEtAl2017}
\\

2-colored MAPF / TAPF &
2D grid with holes &
makespan, SoC &
NP-hard &
three directions suffice~\cite{Geft2023}
\\
\hline

\end{tabular}
\end{table*}

\section{Related Work: Complexity of MAPF Variants}
\label{sec:related}

Graph-based multi-robot path planning and MAPF refer to essentially the same discrete coordination problem.
The variant with interchangeable agents has also been called
\emph{anonymous}, \emph{unlabeled}, \emph{goal-invariant}, or
\emph{permutation-invariant} MAPF.  More generally, $k$-colored MAPF,
or TAPF, partitions the agents
into $k$ interchangeable teams~\cite{MaKoenig2016}.  Thus, one team
corresponds to AMAPF, while one agent per team corresponds to standard
labeled MAPF.

Feasibility of classical pebble-motion problems, including feasible
labeled MAPF on undirected graphs, is polynomial
\cite{KornhauserEtAl1984,YuRus2015}, whereas optimizing labeled MAPF is
NP-hard.  In particular, makespan minimization is NP-hard
\cite{Surynek2010}, and makespan, SoC, and distance-based objectives are
NP-hard on general graphs~\cite{YuLaValle2016}.
Full interchangeability changes this picture substantially:
AMAPF makespan and total traveled distance can be optimized in
polynomial time using single-commodity network-flow formulations
\cite{YuLaValle2013Flow,AliYakovlev2024}. Scalable complete but suboptimal AMAPF algorithms have also been developed~\cite{OkumuraDefago2023}.

For SoC, the behavior of agents after reaching goals becomes critical.
MAPF models commonly distinguish between agents that stay at their
goals and agents that disappear upon arrival~\cite{SternEtAl2019}.
For disappearing AMAPF, SoC is polynomially solvable by minimum-cost
maximum flow~\cite{YuLaValle2013Flow,AliYakovlev2026}.
For standard goal-staying AMAPF, however, existing optimal algorithms
have exponential worst-case complexity~\cite{HonigEtAl2018,
AliYakovlev2024,AliYakovlev2026}.
This work closes the corresponding complexity gap by proving
goal-staying optimal AMAPF-SoC NP-hard.

Partial interchangeability is already sufficient for hardness.
With only two teams, MAPF is NP-hard for both makespan and SoC on
general graphs~\cite{MaEtAl2016,ma2020target}.
For makespan, \cite{MaEtAl2016} established $(4/3-\epsilon)$
inapproximability, and \cite{tan2025inapproximability} strengthened
this to $(3/2-\epsilon)$ even for instances of optimal makespan two.
Thus, for makespan, the tractability boundary between one and two teams
is already sharp.

Hardness also persists on restricted graph classes.
On planar graphs, labeled MAPF is NP-hard for makespan, SoC, total
distance, and maximum individual distance, while the two time-based
objectives remain NP-hard with only two teams~\cite{Yu2016Planar}.
On 2D grid graphs with holes, labeled makespan and SoC are NP-hard
\cite{BanfiEtAl2017}; \cite{Geft2023} further established hardness for two-team
MAPF even when agents move in only three cardinal directions.
On directed graphs, even MAPF feasibility becomes NP-complete in
general~\cite{Nebel2024}, although it is polynomial on strongly
connected digraphs~\cite{Nebel2024,ArdizzoniEtAl2025}; optimal directed
MAPF remains hard under strong restrictions~\cite{TanBercher2023}.

Table~\ref{tab:mapf-complexity} summarizes the results most relevant
to the complexity boundary studied in this paper. It highlights two complementary boundaries.
Partial interchangeability already causes hardness: moving from one
team to two makes the standard time objectives NP-hard.
Our result identifies a different boundary \emph{within} fully
anonymous MAPF: disappearing-agent SoC is polynomial, whereas
goal-staying SoC is NP-hard.

Interchangeability has also been studied in continuous-space multi-robot motion planning. Early work considered path planning directly in a permutation-invariant multi-robot formation space \cite{KloderHutchinson2006} and jointly optimizing assignment and collision-free trajectories for unlabeled robots~\cite{TurpinEtAl2014}. The $k$-color formulation considers multiple interchangeable robot groups \cite{SoloveyHalperin2014}. For fully unlabeled unit discs, polynomial-time algorithms are known under geometric separation assumptions: \cite{AdlerEtAl2015} studied feasibility in simple polygons, while \cite{SoloveyEtAl2015} provided near-optimal total path-length guarantees in polygonal environments. Without such restrictions, even unlabeled motion planning is PSPACE-hard for unit-square robots among polygonal obstacles~\cite{SoloveyHalperin2016}. These geometric results complement, but are not directly comparable with, the graph-based complexity results summarized in Table~\ref{tab:mapf-complexity}.

\section{Problem Formulation}
\label{sec:problem}

An Anonymous Multi-Agent Path Finding (AMAPF) instance is $\mathcal{I}=(G,S,D)$,
where $G=(V,E)$ is an undirected graph and
$S,D\subseteq V$ are sets of $N$ start and goal vertices, respectively.
The vertices within each set are distinct.
Initially, one agent occupies every vertex in $S$.

For each agent $a$, let $s_a\in S$ be its start and
$\pi_a:\mathbb{N}_0\rightarrow V$ its trajectory, where
$\pi_a(0)=s_a$.
At each timestep, an agent may wait or move to an adjacent vertex.
A set of trajectories is collision-free if no two agents occupy the
same vertex at the same time or traverse the same edge in opposite
directions during the same timestep.

A solution consists of a bijective assignment of the agents to the
goals in $D$ and collision-free trajectories such that each agent
eventually remains at its assigned goal forever. 
In particular, an agent starting at a goal need not be assigned to
that goal and may leave it.
For an agent $a$ assigned to goal $g_a\in D$, its completion time is
\(
    T_a
    :=
    \min\{t\geq0:
    \pi_a(t')=g_a,\ \forall t'\geq t\}.
\)
The sum-of-costs (SoC) is
\begin{equation}
    \SoC=\sum_a T_a.
    \label{eq:soc}
\end{equation}
The AMAPF-SoC problem is to find a solution minimizing
\eqref{eq:soc}.

\section{Flow Formulation and Loss of Integrality}
\label{sec:flow}

We first examine why the standard network-flow formulation for AMAPF
does not directly extend to goal-staying SoC.
Let $H$ be a time horizon large enough to contain an optimal solution. Such an $H$ can be chosen polynomial in the size of the input: a polynomial-makespan AMAPF solution exists~\cite{YuLaValle2013Flow}, yielding a polynomial upper bound on SoC and hence on the makespan of a minimum-SoC solution.

Define
\(
    \bar E
    =
    \{(u,v),(v,u):\{u,v\}\in E\}
    \cup
    \{(v,v):v\in V\},
\)
where $(v,v)$ represents waiting at $v$ for one timestep.
For every $(u,v)\in\bar E$ and $t=0,\ldots,H-1$, introduce a binary
flow variable
\(
    x_{uv}^t\in\{0,1\},
\)
which indicates whether an agent moves from $u$ at time $t$ to $v$ at
time $t+1$.
Because the agents are anonymous, the formulation uses a single
commodity and requires no agent index.

The initial configuration is given by
\begin{equation}
    \sum_{v:(u,v)\in\bar E} x_{uv}^{0}
    =
    \begin{cases}
        1, & u\in S,\\
        0, & u\notin S,
    \end{cases}
    \qquad \forall u\in V.
    \label{eq:flow-start}
\end{equation}
Flow conservation requires
\begin{equation}
    \sum_{u:(u,v)\in\bar E} x_{uv}^{t-1}
    =
    \sum_{w:(v,w)\in\bar E} x_{vw}^{t},
    \quad
    \forall v\in V,\;
    t=1,\ldots,H-1.
    \label{eq:flow-conservation}
\end{equation}
Vertex and edge conflicts are excluded by
\begin{align}
    \sum_{u:(u,v)\in\bar E} x_{uv}^{t-1}
    &\leq 1,
    &&
    \forall v\in V,\;
    t=1,\ldots,H,
    \label{eq:flow-vertex}\\
    x_{uv}^{t}+x_{vu}^{t}
    &\leq 1,
    &&
    \forall \{u,v\}\in E,\;
    t=0,\ldots,H-1.
    \label{eq:flow-edge}
\end{align}
Finally, every goal must be occupied at the end of the horizon:
\begin{equation}
    \sum_{u:(u,g)\in\bar E}x_{ug}^{H-1}=1,
    \qquad
    \forall g\in D.
    \label{eq:flow-goal}
\end{equation}

Constraints~\eqref{eq:flow-start}--\eqref{eq:flow-goal} are the
standard single-commodity time-expanded-flow constraints for AMAPF.
They suffice for feasibility and makespan optimization.
For SoC, however, the time at which a goal is first occupied is not
necessarily the completion time: an agent may enter a goal, leave it
later, and only subsequently settle at a goal permanently.

To capture this distinction, for every $g\in D$ and
$t=0,\ldots,H$, introduce a binary \emph{goal-settlement} variable
\(
    y_g^t\in\{0,1\},
\)
where $y_g^t=1$ if and only if the final occupant of $g$ remains at $g$ from time $t$
onward.
Since every goal is occupied at time $H$ and the solution can then be
extended by waiting forever,
\begin{equation}
    y_g^H=1,
    \qquad
    \forall g\in D.
    \label{eq:settle-terminal}
\end{equation}
For $t=0,\ldots,H-1$, permanent occupancy satisfies
\[
    y_g^t=x_{gg}^t\wedge y_g^{t+1},
\]
which can be linearized as
\begin{subequations}
\label{eq:settle}
\begin{align}
    y_g^t &\leq x_{gg}^t,
        \label{eq:settle-a}\\
    y_g^t &\leq y_g^{t+1},
        \label{eq:settle-b}\\
    y_g^t &\geq x_{gg}^t+y_g^{t+1}-1,
        \label{eq:settle-c}
\end{align}
\end{subequations}
for all $g\in D$ and $t=0,\ldots,H-1$.

If $T_g$ denotes the settlement time of goal $g$, i.e., the first
time from which its final occupant remains at $g$, then
\[
    y_g^t=
    \begin{cases}
        0, & t<T_g,\\
        1, & t\geq T_g.
    \end{cases}
\]
Since each completed agent permanently occupies a distinct goal,
$\{T_g:g\in D\}$ is exactly the multiset of agent completion times.
Therefore
\begin{equation}
    \SoC
    =
    \sum_{g\in D}T_g
    =
    \sum_{g\in D}\sum_{t=0}^{H-1}(1-y_g^t).
    \label{eq:flow-soc}
\end{equation}

The resulting formulation is
\begin{equation}
\begin{aligned}
    \min_{x,y}\quad&
        \sum_{g\in D}\sum_{t=0}^{H-1}(1-y_g^t)
        \\
    \text{s.t.}\quad&
        \eqref{eq:flow-start}-\eqref{eq:settle},\\
    &
        x_{uv}^t\in\{0,1\},
        \quad
        (u,v)\in\bar E,\;
        t=0,\ldots,H-1,\\
    &
        y_g^t\in\{0,1\},
        \quad
        g\in D,\;
        t=0,\ldots,H.
\end{aligned}
\tag{IP-SoC}
\label{eq:ip-soc}
\end{equation}

\begin{proposition}
\label{prop:ip-correct}
For any horizon $H$ at least the makespan of a minimum-SoC AMAPF
solution, an optimal solution of \eqref{eq:ip-soc} corresponds to a
minimum-SoC AMAPF solution, with the same SoC.
\end{proposition}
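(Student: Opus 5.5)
The plan is to establish a two-way, cost-preserving correspondence between feasible integral solutions of \eqref{eq:ip-soc} and AMAPF solutions whose makespan is at most $H$. Since $H$ is at least the makespan of some minimum-SoC solution, the optimum of \eqref{eq:ip-soc} will then coincide with the minimum SoC, and any IP optimum will decode to an optimal AMAPF solution.

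\emph{Solutions to IP.} Take a collision-free AMAPF solution with makespan at most $H$. Set $x_{uv}^t=1$ exactly when some agent is at $u$ at time $t$ and at $v$ at time $t+1$ (waiting gives $u=v$). Constraints \eqref{eq:flow-start} and \eqref{eq:flow-conservation} hold because each agent occupies exactly one vertex per timestep. Constraint \eqref{eq:flow-vertex} holds because there are no vertex conflicts, and \eqref{eq:flow-edge} because there are no swaps. Constraint \eqref{eq:flow-goal} holds since every agent is at its assigned goal from $T_a\le H$ onward and the assignment is bijective. Define $y_g^t=1$ iff $t\ge T_{a(g)}$, where $a(g)$ is the agent assigned to $g$. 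I will verify \eqref{eq:settle-terminal} and \eqref{eq:settle} by backward induction on $t$, showing $y_g^t=x_{gg}^t\wedge y_g^{t+1}$. If $t\ge T_{a(g)}$, the agent waits at $g$, so both sides equal $1$. If $t=T_{a(g)}-1$, minimality of $T_{a(g)}$ forces $\pi_{a(g)}(t)\ne g$, so no agent waits at $g$ during step $t$ and $x_{gg}^t=0$. For $t<T_{a(g)}-1$ we have $y_g^{t+1}=0$. By \eqref{eq:flow-soc} the objective then equals $\sum_a T_a$.

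\emph{IP to solutions.} Take an integral feasible $(x,y)$. Constraints \eqref{eq:flow-start}, \eqref{eq:flow-conservation} and \eqref{eq:flow-vertex} make the time-expanded support a disjoint union of $N$ unit paths, one starting at each vertex of $S$. I decompose these into trajectories $\pi_a$ and extend each by waiting after time $H$. Vertex collisions are excluded by \eqref{eq:flow-vertex} and swaps by \eqref{eq:flow-edge}. By \eqref{eq:flow-goal}, the vertices occupied at time $H$ include all of $D$, and since $|D|=N$ they are exactly $D$; this gives a bijective assignment. The linearization \eqref{eq:settle}, together with \eqref{eq:settle-terminal}, determines $y$ uniquely by backward recursion as $y_g^t=\bigwedge_{t'=t}^{H-1}x_{gg}^{t'}$. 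Hence $y_g^t=1$ iff $g$ is continuously occupied via wait arcs from $t$ to $H$.

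The main obstacle is identifying the waiting occupant with the final occupant. A wait arc $(g,g)$ at step $t'$ carries the same unit path across that step, so a maximal suffix of wait arcs belongs to a single agent, namely the one at $g$ at time $H$. Therefore $\sum_t(1-y_g^t)$ equals exactly the completion time $T_a$ of that agent, and the objective equals the SoC.

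Combining the two directions: the IP optimum is at most the SoC of the assumed minimum-SoC solution of makespan at most $H$, and every IP solution is the SoC of some valid AMAPF solution. So an optimal IP solution decodes to a minimum-SoC solution with the same value.
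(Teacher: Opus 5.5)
Your proposal is correct and takes essentially the same route as the paper's proof: decode the integral flow into collision-free trajectories ending on $D$, use the backward recursion to show $y_g^t=1$ exactly when all wait arcs at $g$ from step $t$ through $H-1$ are used, identify this with the completion time of the final occupant of $g$, and sum over goals. You also spell out two things the paper's short proof leaves implicit: the encoding direction (an AMAPF solution of makespan at most $H$ gives a feasible IP point of equal cost), and the fact that a run of consecutive wait arcs at $g$ belongs to a single agent.
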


\begin{proof}
Constraints~\eqref{eq:flow-start}--\eqref{eq:flow-goal} describe a
collision-free anonymous flow from the start configuration at time
$0$ to the goal configuration at time $H$.
For a fixed goal $g$, Constraints~\eqref{eq:settle-terminal} and
\eqref{eq:settle} imply recursively that $y_g^t=1$ exactly when all
wait arcs
\(
    (g,t)\rightarrow(g,t+1),\ldots,
    (g,H-1)\rightarrow(g,H)
\)
are occupied.
Thus $T_g$ is precisely the time at which the final occupant of $g$
arrives and remains there through the end of the horizon.
Extending the solution by waiting after time $H$ makes $T_g$ its
completion time.
Equation~\eqref{eq:flow-soc} therefore equals the sum of all agent
completion times, and minimizing it is equivalent to minimizing SoC.
\end{proof}

The distinction from disappearing AMAPF is now explicit.
If agents disappear after reaching goals, a unit of flow may simply
leave the time-expanded network through a goal at time $t$, with cost
$t$, yielding a standard minimum-cost maximum-flow problem (with additional gadget constructions to exclude vertex and edge conflicts~\cite{YuLaValle2013Flow,AliYakovlev2024}).
With goal staying, completing at a goal instead reserves that vertex
for every subsequent timestep.
The settlement variables encode exactly this persistent reservation.

Although the underlying single-commodity flow constraints are
integral, adding this settlement condition destroys the corresponding
integrality property.
The natural LP relaxation of \eqref{eq:ip-soc} can fractionally settle
a goal, as we show next.

\subsection{Integrality Gap of the LP Relaxation}
\label{subsec:integrality-gap}

The LP relaxation of \eqref{eq:ip-soc}, obtained by replacing the
binary constraints with
\(
    0\leq x_{uv}^t,y_g^t\leq1,
\)
is not integral.
Consider the six-vertex tree in
Figure~\ref{fig:lp-gap-example}, with $S=\{a,g_L,g_R,b\}$, $D=\{g_L,c,g_R,h\}$, and horizon $H=2$.

\begin{figure}[t]
\centering
\begin{tikzpicture}[
    x=1cm,
    y=0.92cm,
    font=\small,
    vertex/.style={
        circle, draw, minimum size=6.5mm, inner sep=0pt
    },
    start/.style={
        vertex, fill=gray!20
    },
    goal/.style={
        vertex, double, double distance=0.8pt
    },
    both/.style={
        vertex, double, double distance=0.8pt, fill=gray!20
    },
]
\node[start] (a)  at (0,0) {$a$};
\node[both]  (gL) at (1.8,0) {$g_L$};
\node[goal]  (c)  at (3.6,0) {$c$};
\node[both]  (gR) at (5.4,0) {$g_R$};
\node[start] (b)  at (7.2,0) {$b$};
\node[goal]  (h)  at (3.6,-1.8) {$h$};

\draw (a)--(gL)--(c)--(gR)--(b);
\draw (c)--(h);
\end{tikzpicture}
\caption{
A six-vertex AMAPF instance.
Gray-filled vertices are starts, double-circled vertices are goals,
and $g_L,g_R$ are both starts and goals.
}
\label{fig:lp-gap-example}
\end{figure}

\paragraph{Integer optimum.}
A solution of SoC $7$ is
{\renewcommand{\arraystretch}{0.6}
\[
\begin{array}{c|ccc}
 & t=0 & t=1 & t=2 \\ \hline
A & a   & a   & g_L \\
L & g_L & g_L & c   \\
R & g_R & c   & h   \\
B & b   & g_R & g_R
\end{array} \text{(all agents waiting for all $t>2$)},
\]}%
with completion times $2,2,2,1$.
To see that no integer solution can do better, first observe that
\(
    T_{g_L}+T_{g_R}\geq3.
\)
Both goals must have completion time at least one, since the agents at
$a$ and $b$ can enter the rest of the graph only through $g_L$ and
$g_R$.  If both were permanently occupied from time $1$, the agents
initially at $g_L$ and $g_R$ would both have to move to $c$ at time
$0\to1$, causing a collision.
Moreover,
\(
    T_c+T_h\geq4.
\)
Indeed, $T_h\geq2$.  If $T_h=2$, an agent must occupy $c$ at time $1$
and move to $h$, so $c$ cannot yet be permanently occupied and
$T_c\geq2$.  If $T_h\geq3$, then $T_c\geq1$ suffices to give the same
bound.  Hence
\(
    \SoC
    =
    T_{g_L}+T_{g_R}+T_c+T_h
    \geq7,
\)
and therefore
\(
    \mathrm{OPT}_{\mathrm{IP}}=7.
\)

\paragraph{LP optimum.}
The relaxation can attain value $6$ by splitting each of the four
initial units of flow symmetrically.
During $0\to1$, half of the flow at $a$ waits and half moves to $g_L$,
while half of the flow initially at $g_L$ waits and half moves to $c$.
The same construction is applied symmetrically on the right.
Thus, at time $1$, $\operatorname{occ}(a)=\operatorname{occ}(b)=\tfrac12$, $
    \operatorname{occ}(g_L)
    =
    \operatorname{occ}(c)
    =
    \operatorname{occ}(g_R)
    =1$.
During $1\to2$, the unit of flow at $c$ moves to $h$.
At each of $g_L$ and $g_R$, half of the flow waits, while the other
half moves to $c$; simultaneously, the remaining half-unit at each
outer vertex moves into the corresponding goal.
All four goals therefore contain one unit of flow at time $2$.
For the settlement variables, set $
    y_{g_L}^0=y_{g_L}^1
    =
    y_{g_R}^0=y_{g_R}^1
    =\tfrac12$,
and
$y_c^0=y_c^1=y_h^0=y_h^1=0$.
Together with $y_g^2=1$ for every goal, these values satisfy
\eqref{eq:settle}.
The resulting objective value is
$1+1+2+2=6$.
For completeness, $6$ is also a lower bound on the LP objective.
Since $h$ can be reached only through $c$, satisfying the terminal
goal constraint at $h$ forces the entire unit of flow at $c$ at time
$1$ to move to $h$.
Consequently, $y_c^0=y_c^1=y_h^0=y_h^1=0$.
Furthermore, one unit of flow must enter $c$ from $g_L$ and $g_R$
during $0\to1$, which implies $y_{g_L}^0+y_{g_R}^0\leq1$.
Similarly, since one unit must enter $c$ from $g_L$ and $g_R$ during
$1\to2$ to occupy $c$ at time $2$, $y_{g_L}^1+y_{g_R}^1\leq1$.
Hence $\sum_{g\in D}\sum_{t=0}^{1}y_g^t\leq2$, and the LP objective is at least $8-2=6$.

We have therefore proved the following.

\begin{proposition}
\label{prop:integrality-gap}
The LP relaxation of \eqref{eq:ip-soc} is not integral.
For the instance in Figure~\ref{fig:lp-gap-example}, $\mathrm{OPT}_{\mathrm{LP}}=6$ and $\mathrm{OPT}_{\mathrm{IP}}=7$, giving an integrality gap of $7/6$.
\end{proposition}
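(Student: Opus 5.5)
The claim has three parts: $\mathrm{OPT}_{\mathrm{IP}}=7$, $\mathrm{OPT}_{\mathrm{LP}}=6$, and non-integrality of the relaxation, which follows from the gap. I will prove each equality by an explicit feasible point for the upper bound and a short structural argument for the lower bound. The underlying fact is that with horizon $H=2$, the settlement variables contribute $\sum_{g}\sum_{t=0}^{1}(1-y_g^t)=8-\sum_g\sum_{t=0}^1 y_g^t$. So every bound reduces to bounding $\sum y_g^t$ over $t\in\{0,1\}$. The horizon $H=2$ is legitimate for the integer problem by Proposition~\ref{prop:ip-correct}, since the exhibited integer solution has makespan $2$.

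\emph{Integer side.} For the upper bound I verify the displayed schedule: it is collision-free (no vertex is shared at any time, and no swap occurs on an edge) and its completion times are $2,2,2,1$. For the lower bound I argue on completion times.
\begin{itemize}
\item $T_{g_L},T_{g_R}\ge1$, because the agents at $a$ and $b$ must leave their leaves through these goals.
\item They cannot both equal $1$: then both of their initial occupants would have to exit to $c$ simultaneously.
\item $T_h\ge2$, since $h$ is at distance $2$ from every start.
\item If $T_h=2$, then $c$ is transited at time $1$, so $T_c\ge2$. Otherwise $T_h\ge3$ and $T_c\ge1$.
\end{itemize}
Summing gives $7$.

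\emph{LP side.} For the upper bound I check that the described half-split flow satisfies \eqref{eq:flow-start}--\eqref{eq:flow-goal}. Conservation holds at each vertex for $t=1$. Vertex loads are at most $1$ at times $1$ and $2$. The edge constraints \eqref{eq:flow-edge} hold; the $\{g_L,c\}$ edge carries flow only in one direction at each step, because during $1\to2$ the flow at $c$ goes to $h$. Every goal has load $1$ at $t=2$. I then check that the stated $y$ values satisfy \eqref{eq:settle}: for $g_L$, $y^0=y^1=\tfrac12\le x_{g_Lg_L}^t=\tfrac12$, monotone, and $y^1\ge x^1+y^2-1=\tfrac12$. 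This gives objective value $6$.

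For the LP lower bound I would proceed in three steps.
\begin{itemize}
\item The unique neighbor of $h$ is $c$, so \eqref{eq:flow-goal} at $h$ forces $x_{ch}^1=1$. Hence $x_{cc}^t$ is $0$ where relevant, and \eqref{eq:settle-a} kills $y_c^0,y_c^1$; similarly $h$ is empty at times $0,1$, giving $y_h^0=y_h^1=0$.
\item Since $c$ must hold one unit at time $1$ (to feed $h$) and at time $2$, and its only other neighbors are $g_L,g_R$, inflow from them is $1$ in each step. Flow leaving $g_L$ to $c$ at step $t$ reduces $x_{g_Lg_L}^t$ by conservation, because $g_L$ carries at most one unit. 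This should give $x_{g_Lg_L}^t+x_{g_Rg_R}^t\le1$, hence $y_{g_L}^t+y_{g_R}^t\le1$ for $t=0,1$ by \eqref{eq:settle-a}.
\item Thus $\sum y\le2$, and the objective is at least $6$.
\end{itemize}

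The main obstacle is the second step, turning "one unit enters $c$" into a bound on the wait arcs. I would write the out-flow of $g_L$ at time $t$ as $x_{g_Lg_L}^t+x_{g_La}^t+x_{g_Lc}^t$, which equals the occupancy of $g_L$ and so is at most $1$ by \eqref{eq:flow-vertex} or \eqref{eq:flow-start}. Then $x_{g_Lg_L}^t\le 1-x_{g_Lc}^t$. Summing this with the symmetric bound for $g_R$ and using $x_{g_Lc}^t+x_{g_Rc}^t=1$ gives the claimed inequality.
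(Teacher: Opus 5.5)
Your proposal is correct and follows the paper's proof essentially step for step. It uses the same explicit integer schedule, the same completion-time bounds $T_{g_L}+T_{g_R}\ge 3$ and $T_c+T_h\ge 4$, the same half-split LP point, and the same LP lower bound via $x_{ch}^1=1$ and $y_{g_L}^t+y_{g_R}^t\le 1$; your explicit derivation of the last inequality from $x_{g_Lg_L}^t\le 1-x_{g_Lc}^t$ and $x_{g_Lc}^t+x_{g_Rc}^t=1$ fills in a step the paper states only briefly.
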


The relaxation succeeds by \emph{fractionally settling} $g_L$ and
$g_R$: part of a unit of flow permanently remains at each goal while
the remaining part continues through the graph.
An integral agent cannot make such a split.
This loss of integrality suggests that permanent goal occupancy
introduces genuine combinatorial choices beyond standard
single-commodity network flow.

\begin{figure*}[t]
\centering
\begin{tikzpicture}[
    x=1cm,
    y=0.92cm,
    font=\small,
    vertex/.style={
        circle, draw, minimum size=6.5mm, inner sep=0pt
    },
    start/.style={
        vertex, fill=gray!20
    },
    goal/.style={
        vertex, double, double distance=0.8pt
    },
    both/.style={
        vertex, double, double distance=0.8pt, fill=gray!20
    },
    internal/.style={
        vertex
    },
    selectorpath/.style={
        densely dotted, line width=0.8pt
    },
    clausepath/.style={
        densely dashed, line width=0.7pt
    },
    incidence/.style={
        line width=0.65pt
    },
    local/.style={
        line width=0.8pt
    },
    gadgetbox/.style={
        draw, gray, rounded corners=2pt, densely dashed,
        inner sep=7pt
    },
    pathlabel/.style={
        fill=white, inner sep=1.2pt, font=\scriptsize
    }
]


\node[start]    (s1)  at (0,-4.8) {$s_1$};
\node[both]     (g1p) at (-0.95,-2.25) {$g_1^+$};
\node[both]     (g1m) at ( 0.95,-2.25) {$g_1^-$};
\node[internal] (v1)  at (0,-0.45) {$v_1$};
\node[goal]     (h1)  at (0,1.15) {$h_1$};

\draw[selectorpath]
  (s1) to[bend left=0]
  node[pathlabel,sloped,above] {} (g1p);
\draw[selectorpath]
  (s1) to[bend right=0]
  node[pathlabel,sloped,above] {} (g1m);

\draw[local] (g1p)--(v1);
\draw[local] (g1m)--(v1);
\draw[local] (v1)--(h1);

\node[gadgetbox,
      fit=(s1)(g1p)(g1m)(v1)(h1),
      label={[font=\bfseries]above:Variable $X_1$}] {};


\node[start]    (s2)  at (5,-4.8) {$s_2$};
\node[both]     (g2p) at (4.05,-2.25) {$g_2^+$};
\node[both]     (g2m) at (5.95,-2.25) {$g_2^-$};
\node[internal] (v2)  at (5,-0.45) {$v_2$};
\node[goal]     (h2)  at (5,1.15) {$h_2$};

\draw[selectorpath]
  (s2) to[bend left=0]
  node[pathlabel,sloped,above] {} (g2p);
\draw[selectorpath]
  (s2) to[bend right=0]
  node[pathlabel,sloped,above] {} (g2m);

\draw[local] (g2p)--(v2);
\draw[local] (g2m)--(v2);
\draw[local] (v2)--(h2);

\node[gadgetbox,
      fit=(s2)(g2p)(g2m)(v2)(h2),
      label={[font=\bfseries]above:Variable $X_2$}] {};


\node[start]    (s3)  at (10,-4.8) {$s_3$};
\node[both]     (g3p) at (9.05,-2.25) {$g_3^+$};
\node[both]     (g3m) at (10.95,-2.25) {$g_3^-$};
\node[internal] (v3)  at (10,-0.45) {$v_3$};
\node[goal]     (h3)  at (10,1.15) {$h_3$};

\draw[selectorpath]
  (s3) to[bend left=0]
  node[pathlabel,sloped,above] {} (g3p);
\draw[selectorpath]
  (s3) to[bend right=0]
  node[pathlabel,sloped,above] {} (g3m);

\draw[local] (g3p)--(v3);
\draw[local] (g3m)--(v3);
\draw[local] (v3)--(h3);

\node[gadgetbox,
      fit=(s3)(g3p)(g3m)(v3)(h3),
      label={[font=\bfseries]above:Variable $X_3$}] {};



%

\node[goal] (d1) at (-0.8,4.15) {$d_1$};
\node[goal] (d2) at ( 3.2,4.15) {$d_2$};
\node[goal] (d3) at ( 7.2,4.15) {$d_3$};
\node[goal] (d4) at (11.2,4.15) {$d_4$};

\node[font=\scriptsize,anchor=south]
    at ($(d1.north)+(0,0.12)$) {$C_1$};
\node[font=\scriptsize,anchor=south]
    at ($(d2.north)+(0,0.12)$) {$C_2$};
\node[font=\scriptsize,anchor=south]
    at ($(d3.north)+(0,0.12)$) {$C_3$};
\node[font=\scriptsize,anchor=south]
    at ($(d4.north)+(0,0.12)$) {$C_4$};

%

\draw[incidence] (v1) to[bend left=25] (d1);
\draw[incidence] (v2) to[bend right=15] (d1);

\draw[incidence] (v1) to[bend left=18] (d2);
\draw[incidence] (v3) to[bend right=18] (d2);

\draw[incidence] (v2) to[bend left=4] (d3);
\draw[incidence] (v3) to[bend right=8] (d3);

\draw[incidence] (v1) to[bend left=15] (d4);
\draw[incidence] (v2) to[bend left=15] (d4);
\draw[incidence] (v3) to[bend right=15] (d4);

%

\node[start] (c1) at (-2.0,-6.25) {$c_1$};
\node[start] (c2) at ( 2.0,-7.05) {$c_2$};
\node[start] (c3) at ( 6.6,-7.85) {$c_3$};
\node[start] (c4) at (11.2,-8.65) {$c_4$};

\node[font=\scriptsize,anchor=east]
  at ($(c1.west)+(-0.15,0)$) {$\lambda=-1$};
\node[font=\scriptsize,anchor=east]
  at ($(c2.west)+(-0.15,0)$) {$\lambda=-2$};
\node[font=\scriptsize,anchor=east]
  at ($(c3.west)+(-0.15,0)$) {$\lambda=-3$};
\node[font=\scriptsize,anchor=west]
  at ($(c4.east)+(0.15,0)$) {$\lambda=-4$};

%


\draw[clausepath]
  (c1) to[bend left=12]
  node[pathlabel,sloped,above,pos=0.2] {$1$} (g1p);

\draw[clausepath]
  (c1) to[bend right=18]
  node[pathlabel,sloped,below,pos=0.2] {$1$} (g2m);


\draw[clausepath]
  (c2) to[bend left=18]
  node[pathlabel,sloped,above,pos=0.2] {$2$} (g1m);

\draw[clausepath]
  (c2) to[bend right=17]
  node[pathlabel,sloped,below,pos=0.2] {$2$} (g3p);


\draw[clausepath]
  (c3) to[bend left=10]
  node[pathlabel,sloped,above,pos=0.2] {$3$} (g2m);

\draw[clausepath]
  (c3) to[bend right=10]
  node[pathlabel,sloped,below,pos=0.2] {$3$} (g3m);


\draw[clausepath]
  (c4) to[bend left=10]
  node[pathlabel,sloped,below,pos=0.2] {$4$} (g1p);

\draw[clausepath]
  (c4) to[bend right=10]
  node[pathlabel,sloped,below,pos=0.2] {$4$} (g2p);

\draw[clausepath]
  (c4) to[bend right=10]
  node[pathlabel,sloped,below,pos=0.2] {$4$} (g3m);


\node[font=\scriptsize,anchor=west]
    at (11.8,-4.8) {$\lambda=-5$};

\node[font=\scriptsize,anchor=west]
    at (11.8,-2.25) {$\lambda=0$};

\node[font=\scriptsize,anchor=west]
    at (11.8,-0.45) {$\lambda=1$};

\node[font=\scriptsize,anchor=west]
    at (11.8,1.15) {$\lambda=2$};

\node[font=\scriptsize,anchor=west]
    at (11.8,4.15) {$\lambda=2$};




\node[start] (legS) at (-2.1,-8.4) {};
\node[anchor=west,font=\scriptsize] at (-1.75,-8.4) {start};

\node[goal] (legG) at (-2.1,-9.15) {};
\node[anchor=west,font=\scriptsize] at (-1.75,-9.15) {goal};

\node[both] (legB) at (-2.1,-9.9) {};
\node[anchor=west,font=\scriptsize] at (-1.75,-9.9) {start and goal};

\draw[selectorpath] (1.2,-9.9)--(2.2,-9.9);
\node[anchor=west,font=\scriptsize] at (2.2,-9.9) {selector path of length $K=5$};

\draw[clausepath] (5.8,-9.9)--(6.8,-9.9);
\node[anchor=west,font=\scriptsize] at (6.8,-9.9) {private clause path of length $\tau_j$};

\draw[incidence] (10.6,-9.9)--(11.6,-9.9);
\node[anchor=west,font=\scriptsize] at (11.6,-9.9) {unit length edge};

\end{tikzpicture}

\caption{
Example of the reduction for
$\Phi=(X_1\vee\neg X_2)\wedge(\neg X_1\vee X_3)
\wedge(\neg X_2\vee\neg X_3)\wedge
(X_1\vee X_2\vee\neg X_3)$.
Here $m=4$, so $K=5$ and
$(\tau_1,\tau_2,\tau_3,\tau_4)=(1,2,3,4)$.
The clause starts $c_j$, of potential $-\tau_j$, are shown below the
variable gadgets, whereas all clause goals $d_j$ have potential $2$
and are shown above them.
A dotted connection labeled $5$ represents an internally disjoint
selector path of exactly $K$ unit edges, and a dashed connection
labeled $\tau_j$ represents an internally private clause path
$Q_{j,\ell}$ of exactly $\tau_j$ unit edges.
All other displayed connections are individual edges.
The literal gates $g_i^+$ and $g_i^-$ are simultaneously starts and
goals, and every path from either gate toward a positive-potential
goal passes through the common bottleneck $v_i$.
}
\label{fig:amapf-reduction-example}
\end{figure*}

\section{NP-Hardness of Optimal AMAPF-SoC}
\label{sec:hardness}

We now show that the loss of integrality identified above reflects an
inherent computational difficulty.

\begin{theorem}
\label{thm:hardness}
Computing a minimum-SoC solution for AMAPF is NP-hard.
\end{theorem}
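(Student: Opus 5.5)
We reduce from 3-SAT, following the construction of Figure~\ref{fig:amapf-reduction-example}. Given a formula $\Phi$ with variables $X_1,\dots,X_n$ and clauses $C_1,\dots,C_m$, set $K=m+1$ and clause delays $\tau_j=j$.

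\textbf{The construction.}
\begin{itemize}
\item For each variable $X_i$ we build a gadget consisting of a start $s_i$, two literal gates $g_i^+,g_i^-$ (each both a start and a goal), a bottleneck $v_i$, and a goal $h_i$. Each gate is joined to $s_i$ by an internally disjoint selector path of length $K$, and the edges are $g_i^\pm v_i$ and $v_ih_i$.
\item For each clause $C_j$ we add a start $c_j$ and a goal $d_j$. From $c_j$ we run a private path $Q_{j,\ell}$ of length $\tau_j$ to the gate of each literal $\ell\in C_j$, and we join $d_j$ by an edge to $v_i$ for every variable occurring in $C_j$.
\end{itemize}
Starts and goals are balanced: $2n+n+m$ starts and $2n+n+m$ goals. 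We then compute an explicit threshold $B$, equal to the SoC of the canonical solution described below, and prove that $\Phi$ is satisfiable iff the optimum SoC is at most $B$. The reduction is clearly polynomial.

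\textbf{Completeness (satisfiable $\Rightarrow$ SoC $\le B$).} Fix a satisfying assignment. In each variable gadget, the agent on the gate of the \emph{false} literal leaves through $v_i$ to $h_i$ at time $2$. The agent on the true gate waits for a period and later exits to serve a clause goal, or settles. Then $s_i$'s agent walks its selector path and settles on whichever gate is vacated, at time about $K$.

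Each clause $C_j$ picks one true literal $\ell$. Its agent $c_j$ travels along $Q_{j,\ell}$ to that gate at time $\tau_j$. The gate's occupant then moves through $v_i$ to $d_j$, and the clause agent settles on the gate.

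The key design point is the distinct delays $\tau_j$: clause agents arrive at a shared gate at staggered times, so several clauses using the same literal can be served sequentially through $v_i$ without collision. I would tabulate each agent's completion time explicitly and sum them to define $B$. The potentials $\lambda$ in the figure suggest a lower bound of the form $\sum_{\text{goals}}\lambda(\text{goal})-\sum_{\text{starts}}\lambda(\text{start})$ plus mandatory waits, and the canonical solution should meet it with equality.

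\textbf{Soundness (SoC $\le B$ $\Rightarrow$ satisfiable).} This is the main obstacle. I would first prove a general lower bound on SoC via the potential $\lambda$: every agent's completion time is at least the potential gain of its start-to-goal pair. Added over all agents, this gives a bound independent of the assignment, up to an additive slack term that counts waiting and non-monotone moves.

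Next I would show that SoC $\le B$ forces zero slack. Zero slack implies four structural facts:
\begin{enumerate}
\item each $v_i$ is traversed only in the forward direction, one agent at a time;
\item each $h_i$ is filled at time exactly $2$ by one gate agent, so one gate of each pair is vacated at time $0$ or $1$;
\item each $c_j$ moves monotonically along a single private path, reaching a gate at time $\tau_j$;
\item the $s_i$ agents arrive at time $K$ and can only refill a vacated gate.
\end{enumerate}
Set $X_i$ true iff $g_i^+$ is \emph{not} the gate vacated for $h_i$. We must show each clause agent enters a true gate. A clause agent entering a gate vacated early would have to settle there, leaving $s_i$'s agent without a goal within time $K$. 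Since $K>\max_j\tau_j$, such a delay costs strictly more than any saving, which contradicts the budget.

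The delicate part is an exchange/counting argument ruling out every way to trade delays among agents within budget: pairs of agents swapping roles, or $d_j$ being filled via a different $v_i$ than the one the clause agent's gate feeds. I would handle it by a case analysis on the first time a gate is vacated, using the bottleneck $v_i$ (capacity one per timestep) and the distinctness of the $\tau_j$ to charge every deviation at least one unit of slack.
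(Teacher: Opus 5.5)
\textbf{Verdict: genuine gap.} Your graph is exactly the paper's, but the dynamics you assign to it are inverted, and this breaks both directions of the proof.

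\textbf{Completeness.} In your scheme the \emph{false}-gate agent leaves for $h_i$, while a clause agent \emph{settles} on its true gate after that gate's occupant is pushed out to $d_j$. As stated this is not feasible. Once the first clause agent settles on $g(\ell)$, every later clause using $\ell$ is blocked forever, which contradicts your ``served sequentially.'' Also, the occupant must already be in $v_i$ at time $\tau_j$; for $\tau_1=1$ this collides with the false-gate agent you sent through $v_i$ at time $1$. Even if you repair it with a push-through chain, the occupants wait. The solution then carries slack of about $\tau_{\max}-k$ per gate serving $k$ clauses, and its cost depends on the assignment and on which clauses share gates. So ``$B$ := SoC of the canonical solution'' is not computable from $\Phi$, and there are no ``mandatory waits'' to add to the potential bound.

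The missing idea is the tight solution:
\begin{itemize}
\item the \emph{true}-gate agent leaves at time $0$ via $g\to v_i\to h_i$, completing at $2$;
\item the false-gate agent stays put forever, at cost $0$;
\item each clause agent passes \emph{through} the vacated true gate along $c_j\to g(\ell)\to v_i\to d_j$, completing at $\tau_j+2$, with the distinct $\tau_j$ pipelining through $v_i$;
\item the selector refills the vacated gate at time $K>\tau_j$.
\end{itemize}
This meets the pure potential bound $B=n(K+2)+\sum_j(\tau_j+2)$ with equality, independently of the assignment.

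\textbf{Soundness.} With your larger threshold, $\mathrm{SoC}\le B$ no longer forces zero slack. The exchange/case analysis you defer is therefore where the whole soundness proof would have to happen, and none is given.

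Your pivotal claim, that a clause agent entering a vacated gate ``would have to settle there,'' is also false. It can continue to $v_i$ and then $d_j$, which is exactly what tight solutions do. In a tight solution a gate agent either completes at time $0$ (staying at its own gate forever) or moves to $v_i$ at time $1$. So the non-vacated gate is blocked from time $0$, and every clause agent necessarily enters a \emph{vacated} gate at time $\tau_j$. Under your convention ($X_i$ true iff $g_i^+$ is not vacated), every clause agent therefore enters a false gate, and the step ``show each clause agent enters a true gate'' cannot succeed.

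Take $B$ to be the potential bound and declare the vacated gate true. The paper's short argument then goes through:
\begin{itemize}
\item the bottleneck $v_i$ lets at most one gate agent leave;
\item the selector, which must reach some gate at time $K$, forces at least one to leave;
\item each clause agent reaches a gate of its own clause at time $\tau_j$, and that gate must be the vacated, i.e.\ true, one.
\end{itemize}
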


We reduce from 3-SAT.
Let
\(
    \Phi=\bigwedge_{j=1}^{m} C_j
\)
be a 3-CNF formula over variables $X_1,\ldots,X_n$, where each clause $C_j$ contains at most three literals.
We may assume that each clause contains distinct literals and no
complementary pair.

\subsection{Potential Lower Bound}

The reduction uses a potential function to characterize solutions that
attain a tight SoC lower bound.
Let
\(
    \lambda:V\rightarrow\mathbb{Z}
\)
satisfy
\begin{equation}
    |\lambda(u)-\lambda(v)|\leq1,
    \qquad
    \forall\{u,v\}\in E.
    \label{eq:potential-lipschitz}
\end{equation}

\begin{lemma}
\label{lem:potential}
Every feasible AMAPF solution satisfies
\begin{equation}
    \SoC
    \geq
    B_\lambda
    :=
    \sum_{g\in D}\lambda(g)
    -
    \sum_{s\in S}\lambda(s).
    \label{eq:potential-bound}
\end{equation}
If equality holds, then every agent, before completion, moves
at every timestep along an edge that increases $\lambda$ by exactly
one.
\end{lemma}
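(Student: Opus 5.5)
Fix any feasible solution, with agents $a$, starts $s_a$, assigned goals $g_a$ and completion times $T_a$. The plan is a telescoping argument per agent, then summation over agents using the bijection between agents and goals.

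For a single agent $a$, consider the trajectory segment $\pi_a(0),\pi_a(1),\ldots,\pi_a(T_a)$. We have $\pi_a(0)=s_a$ and $\pi_a(T_a)=g_a$. Each step is either a wait, which changes $\lambda$ by $0$, or a move along an edge, which by \eqref{eq:potential-lipschitz} changes $\lambda$ by at most $1$. Telescoping gives
\[
\lambda(g_a)-\lambda(s_a)=\sum_{t=0}^{T_a-1}\bigl(\lambda(\pi_a(t+1))-\lambda(\pi_a(t))\bigr)\le T_a .
\]
Summing over all agents, each start in $S$ appears exactly once, since one agent starts at every vertex of $S$. Each goal in $D$ also appears exactly once, since the assignment is a bijection onto $D$. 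The left-hand sides therefore sum to $B_\lambda$ and the right-hand sides to $\SoC$, which proves \eqref{eq:potential-bound}.

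For the equality case, every per-agent inequality is slack-free termwise: each summand is at most $1$. So if $\SoC=B_\lambda$, then every agent's inequality must be tight. Since a sum of $T_a$ terms each at most $1$ equals $T_a$ only when every term is exactly $1$, every step $t<T_a$ must increase $\lambda$ by exactly one. Such a step cannot be a wait, so it is a move along an edge with $\lambda$ increasing by one.

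There is no real obstacle here; the only care needed is bookkeeping. Agents starting at a goal may leave it, so sums must be indexed by agents, with $S$ and $D$ each covered exactly once. The argument also uses only the segment up to $T_a$, so behavior after completion is irrelevant, and collision constraints are not needed.
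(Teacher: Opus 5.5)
Your proposal is correct and follows the paper's proof essentially verbatim: per-agent telescoping using the Lipschitz condition, summation over agents via the bijections with $S$ and $D$, and termwise tightness for the equality case. Your explicit remark that a tight step cannot be a wait, since a wait changes $\lambda$ by $0$, spells out a detail the paper leaves implicit.
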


\begin{proof}
Consider an agent $a$ starting at $s$ and completing at $g$ at time
$T_a$.
By \eqref{eq:potential-lipschitz},
\[
    \lambda(g)-\lambda(s)
    =
    \sum_{t=0}^{T_a-1}
    \bigl(
        \lambda(\pi_a(t+1))-\lambda(\pi_a(t))
    \bigr)
    \leq T_a.
\]
Summing over all agents and using the bijection between agents and
goals gives \eqref{eq:potential-bound}.
Equality can hold only if every individual inequality is tight, which
requires every move before completion to increase the potential by
exactly one.
\end{proof}

\subsection{Reduction}

Set $\tau_j:=j$, $K:=m+1$.
Hence $1\leq\tau_j<K$,
and all $\tau_j$ are distinct.
Unless stated otherwise, all paths introduced below are internally
vertex-disjoint.

\paragraph{Variable gadget.}
For each variable $X_i$, introduce two \emph{literal gates}
$g_i^+$ and $g_i^-$, a shared bottleneck $v_i$, and an auxiliary goal
$h_i$, with
\[
    \lambda(g_i^+)=\lambda(g_i^-)=0,
    \qquad
    \lambda(v_i)=1,
    \qquad
    \lambda(h_i)=2.
\]
Add
\[
    (g_i^+,v_i),\qquad
    (g_i^-,v_i),\qquad
    (v_i,h_i).
\]
Both literal gates are simultaneously starts and goals.

Introduce an additional selector start $s_i$ with
\[
    \lambda(s_i)=-K,
\]
and connect it to $g_i^+$ and $g_i^-$ by two internally disjoint paths
$P_i^+$ and $P_i^-$, each of length $K$.
The potential increases by one along every edge of either path.

Thus the variable gadget contributes starts
\[
    g_i^+,\quad g_i^-,\quad s_i
\]
and goals
\[
    g_i^+,\quad g_i^-,\quad h_i.
\]

The shared bottleneck is the key feature.
In any solution attaining the potential lower bound, an agent leaving
either literal gate must move to $v_i$ at time $1$.
Hence the two literal-gate agents cannot both leave.

\paragraph{Clause gadget.}
For each clause $C_j$, introduce a clause start $c_j$ and a clause goal
$d_j$ with
\[
    \lambda(c_j)=-\tau_j,
    \qquad
    \lambda(d_j)=2.
\]
For a literal $\ell$, define
\[
    g(\ell)=
    \begin{cases}
        g_i^+, & \ell=X_i,\\
        g_i^-, & \ell=\neg X_i.
    \end{cases}
\]
For every occurrence $\ell\in C_j$, add an internally private path
$Q_{j,\ell}$ of length $\tau_j$ from $c_j$ to $g(\ell)$, with
potential increasing by one along every edge.

Finally, for every variable $X_i$ occurring in $C_j$, add the edge
\[
    (v_i,d_j).
\]
Thus a clause agent choosing literal $\ell\in C_j$, whose variable is
$X_i$, can follow
\[
    c_j
    \xrightarrow[\tau_j]{Q_{j,\ell}}
    g(\ell)
    \to v_i
    \to d_j,
    \label{eq:clause-route}
\]
using $\tau_j+2$ steps.

An example of the construction is shown in
Figure~\ref{fig:amapf-reduction-example}.
The polarity of an occurrence is encoded by the literal gate through
which the clause agent enters; the outgoing edge from $v_i$ depends
only on the variable.

If a connected graph is desired, add
\[
    (s_i,s_{i+1}),
    \qquad i=1,\ldots,n-1.
\]
Both endpoints have potential $-K$, so no solution attaining
\eqref{eq:potential-bound} can use these edges.

The complete start and goal sets are
\begin{align}
S
&=
\{g_i^+,g_i^-,s_i:1\leq i\leq n\}
\cup
\{c_j:1\leq j\leq m\},\\
D
&=
\{g_i^+,g_i^-,h_i:1\leq i\leq n\}
\cup
\{d_j:1\leq j\leq m\}.
\end{align}
The instance therefore contains $3n+m$ agents.
Since $K=m+1$, the construction has polynomial size.

\subsection{SoC Threshold}

The total goal potential is
\[
    \sum_{g\in D}\lambda(g)=2n+2m,
\]
whereas the total start potential is
\[
    \sum_{s\in S}\lambda(s)
    =
    -nK-\sum_{j=1}^{m}\tau_j.
\]
Lemma~\ref{lem:potential} therefore gives
\begin{equation}
    \SoC\geq
    B
    :=
    n(K+2)
    +
    \sum_{j=1}^{m}(\tau_j+2).
    \label{eq:threshold}
\end{equation}
We show that $\Phi$ is satisfiable if and only if the constructed
instance admits a solution of SoC $B$.

\subsection{Correctness}

\begin{lemma}
\label{lem:complete}
If $\Phi$ is satisfiable, then the constructed AMAPF instance has a
collision-free solution of SoC $B$.
\end{lemma}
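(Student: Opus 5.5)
We build the schedule explicitly from a satisfying assignment $\alpha$ of $\Phi$. Every agent will move only along edges that increase $\lambda$ by one until it completes. Its completion time then equals the potential difference between its goal and its start, so the total is exactly $B$ by the computation preceding \eqref{eq:threshold}. It remains to define the paths and verify that they are collision-free.

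\emph{Variable gadgets.} Fix $X_i$ and let $\ell_i\in\{X_i,\neg X_i\}$ be the literal made true by $\alpha$. We write $g_i^\ast:=g(\ell_i)$ and let $\bar g_i$ be the other gate. The agent starting at $\bar g_i$ waits there forever and completes at time $0$. The agent starting at $g_i^\ast$ moves to $v_i$ at time $1$ and to $h_i$ at time $2$, and stays there, completing at time $2$. The selector agent at $s_i$ follows the path of length $K$ toward $g_i^\ast$ (that is, $P_i^+$ or $P_i^-$), reaches $g_i^\ast$ at time $K$, and stays, completing at time $K$. The gadget thus contributes $K+2$, and its three goals $g_i^+,g_i^-,h_i$ are each filled by one agent.

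\emph{Clause gadgets.} For each clause $C_j$, pick a literal $\ell\in C_j$ that is true under $\alpha$, and let $X_i$ be its variable. Then $g(\ell)=g_i^\ast$. The clause agent follows $Q_{j,\ell}$, reaches $g_i^\ast$ at time $\tau_j$, is at $v_i$ at time $\tau_j+1$, and is at $d_j$ at time $\tau_j+2$, where it stays. This contributes $\tau_j+2$, and summing over all gadgets gives $B$.

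\emph{Collision-freeness.} This is the main verification. It reduces to checking the few shared vertices, because the selector paths and clause paths have private interiors. There are no edge (swap) conflicts: every move increases $\lambda$ by exactly one, so two agents can never traverse one edge in opposite directions. For vertex conflicts we check each type of shared vertex in turn.
\begin{itemize}
\item The gate $g_i^\ast$. It is empty from time $1$ until the selector arrives at time $K$. Clause agents occupy it only at times $\tau_j\in\{1,\ldots,m\}$, and these times are distinct because the $\tau_j$ are distinct and $\tau_j<K$. At $\tau_j=1$ the original occupant has already moved to $v_i$, since its move $g_i^\ast\to v_i$ happens during $0\to1$ and is not a swap with the incoming clause agent.
\item The gate $\bar g_i$. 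Only its waiting agent is ever there, because clause agents use only gates of true literals.
\item The bottleneck $v_i$. It is occupied at time $1$ by the departing gate agent, and at the distinct times $\tau_j+1\ge 2$ by clause agents.
\item The goals $h_i$ and $d_j$. The goal $h_i$ is reached only by the gate agent, and each $d_j$ is reached only by its own clause agent.
\end{itemize}
The optional edges $(s_i,s_{i+1})$ are never used. Hence the trajectories are collision-free, and each agent remains at a distinct goal forever, with total SoC exactly $B$.
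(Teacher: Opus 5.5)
Your proposal is correct and uses the same construction as the paper. The false-gate agent stays put, the true-gate agent goes through $v_i$ to $h_i$, the selector reaches the vacated gate at time $K$, and each clause agent is routed through a true gate, with distinct $\tau_j$ and $\tau_j<K$ ruling out conflicts. Your explicit observation that strictly potential-increasing moves rule out swap conflicts, together with your vertex-by-vertex check, spells out what the paper states more briefly.
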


\begin{proof}
Fix a satisfying assignment.
For each $X_i$, let $g_i^{\sigma_i}$ denote the gate corresponding to
its true literal and $g_i^{-\sigma_i}$ the other gate.
Keep the agent at $g_i^{-\sigma_i}$ there permanently.
Route the agent at $g_i^{\sigma_i}$ as $g_i^{\sigma_i}\to v_i\to h_i$,
and route the selector at $s_i$ along $P_i^{\sigma_i}$ to the vacated
gate, which it reaches at time $K$.
For every clause $C_j$, choose a true literal $\ell_j\in C_j$ and
route its agent as \(c_j
    \xrightarrow[\tau_j]{Q_{j,\ell_j}}
    g(\ell_j)
    \to v_i
    \to d_j\),
where $X_i$ is the variable of $\ell_j$.
The clause agent reaches its goal at time $\tau_j+2$.
These trajectories are collision-free.
The false gate is occupied permanently, whereas clauses use only true
gates.
The moving literal-gate agent vacates its gate at time $0$ and leaves
$v_i$ at time $2$, so a clause reaching the gate as early as time $1$
can follow immediately behind it.
Likewise, $\tau_j<K$ ensures that every clause leaves its selected gate
before the selector arrives at time $K$.
Finally, the $\tau_j$ are pairwise distinct, so clauses using the same
literal reach its gate at different times and pipeline through
$g(\ell)\to v_i$ without collision.
All other relevant paths are internally private.

The completion times contributed by each variable are 0, 2, and $K$,
and clause $C_j$ contributes $\tau_j+2$.
Hence the total SoC is exactly \(n(K+2)+\sum_{j=1}^{m}(\tau_j+2)=B\).
\end{proof}

\begin{lemma}
\label{lem:sound}
If the constructed AMAPF instance has a solution of SoC at most $B$,
then $\Phi$ is satisfiable.
\end{lemma}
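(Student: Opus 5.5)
Since Lemma~\ref{lem:potential} gives $\SoC\geq B$ for every solution, a solution of SoC at most $B$ attains equality. We then use the structural conclusion of Lemma~\ref{lem:potential}: before completion, every agent moves at every timestep along an edge that raises $\lambda$ by exactly one. The plan is to extract from this tight solution a truth assignment and then show that every clause is satisfied.

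\emph{Step 1: the only moves are potential-increasing ones.} Every agent with positive completion time moves at each step to a strictly higher-potential neighbour. So no agent ever waits before completing, and no agent ever uses an edge between vertices of equal potential, such as $(s_i,s_{i+1})$. The maximal potential is $2$, attained only at the goals $h_i,d_j$. An agent that reaches potential $2$ cannot move further up, so it must complete there. Similarly, an agent at a literal gate (potential $0$) either completes there or moves to some $v_i$ (potential $1$) and then to $h_i$ or some $d_j$.

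\emph{Step 2: each variable gadget fixes one literal.} Consider $v_i$. The agents starting at $g_i^+$ and $g_i^-$ are at potential $0$. If one leaves, it is at $v_i$ at time $1$, so they cannot both leave. Next count goals. The potential-$2$ goals $h_i$ and $d_j$ number $n+m$. They must be filled by agents arriving through some $v_i$, which are reachable only from the gates; selector and clause agents reach potential $0$ only at gates.

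The agents are $3n+m$ and the goals are $3n+m$. Each selector $s_i$ must complete at potential $\ge$ its reachable targets. Along $P_i^\pm$ it reaches a gate at time $K$, and it could then continue to $v_i$ and a potential-$2$ goal. Agents starting at gates either stay or leave.

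So the gate goals ($2n$ of them) are filled by staying gate agents, by selectors, or by clause agents. I would argue by counting potential levels. Each gate $g_i^\pm$ must be finally occupied, and each $h_i$ must be filled by an agent passing through $v_i$, since $v_i$ is the only neighbour of $h_i$. Hence at least one agent passes through $v_i$ per $i$. Define $X_i$ true iff the gate through which the first agent enters $v_i$ is $g_i^+$. More carefully: since the two gate agents cannot both leave, at most one gate-start agent leaves, and the other gate's starting agent stays permanently, completing at time $0$.

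I expect to show exactly one leaves. Otherwise $h_i$ would have to be filled by a selector or clause agent, and then some gate loses its agent, forcing a count mismatch. I plan to handle this by a global counting argument; it is the main obstacle. Assign $X_i$ so that the permanently occupied gate (if any) is the false literal.

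\emph{Step 3: clauses are satisfied.} Each clause agent $c_j$ never waits and can only follow some $Q_{j,\ell}$. It reaches $g(\ell)$ at time $\tau_j$, while $\tau_j\geq1>0$, so the gate must be vacant then. If $g(\ell)$ is the gate whose original agent stays forever, there is a vertex collision. Hence every clause passes through a gate that is not permanently blocked, i.e., a true literal under the assignment, and $\Phi$ is satisfied.

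The main obstacle is Step~2: ruling out pathological tight solutions. Examples are both gate agents staying while a selector or clause agent fills $h_i$, or a clause agent completing at a gate. I would try to eliminate these by first matching goals by potential level. There are $n+m$ goals at level $2$ and $2n$ at level $0$. I would then show, via the bottleneck $v_i$ and the equality condition, that any assignment in which neither gate agent stays leads to a collision at $v_i$ at time $1$. If neither stays and neither leaves, the contradiction is immediate. Thus some gate start stays permanently, which is all Step~3 needs.
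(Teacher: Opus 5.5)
Your route is the paper's: tightness via Lemma~\ref{lem:potential}, the bottleneck $v_i$ forbidding both gate agents from leaving, and the fact that a clause agent must stand on $g(\ell)$ at time $\tau_j$. That last fact means $g(\ell)$ cannot be a gate whose original agent stays from time $0$ onward. Steps~1 and~3 are sound. You are also right that Step~3 needs only ``at least one gate agent of $X_i$ stays''. Consequently, the paper's further steps (the selector completes at the vacated gate, and clause agents complete at level two) are not actually needed for the conclusion.

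As written, though, there is a gap. You explicitly leave open the case where \emph{both} gate agents of $X_i$ stay; you even list it as an unresolved pathological case. In that case your rule ``assign $X_i$ so that the permanently occupied gate is the false literal'' does not define $X_i$. Your ``i.e., a true literal'' in Step~3 silently assumes that the gate opposite $g(\ell)$ is the \emph{unique} blocked gate of its variable.

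Closing this takes one line, and no global counting argument is needed. The paper uses the selector. Its only tight moves are along $P_i^+$ or $P_i^-$, and no vertex before the gate is a goal. So it must stand on $g_i^+$ or $g_i^-$ at time $K$, which is impossible if both gates are occupied from time $0$ forever.

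Your own remark that $h_i$ must be entered from $v_i$ also closes it. Under tightness, $v_i$ can be entered only from its potential-$0$ neighbours $g_i^+$ and $g_i^-$, and completed agents never move. So if both gate agents stay, no agent ever reaches $v_i$, and $h_i$ is never filled.

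Alternatively, set $X_i$ arbitrarily in the both-stay case. Then no clause agent can pass through either gate of $X_i$, and every variable actually used by a clause has exactly one blocked gate.

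With any of these patches the proof is complete. The speculative counting argument and the tentative ``first agent entering $v_i$'' definition can be dropped.
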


\begin{proof}
Suppose that $\SoC\leq B$.
By Lemma~\ref{lem:potential} and \eqref{eq:threshold},
$\SoC=B$.
Thus every agent must move exclusively along edges that increase
$\lambda$ by one until it completes its path.

Consider a variable $X_i$.
An agent initially at $g_i^+$ or $g_i^-$ starts at potential zero.
If it completes at a level-zero goal, tightness requires completion at
time zero, so it must stay at its own starting gate.
Otherwise, its only potential-increasing first move is to $v_i$.
Consequently, if both literal-gate agents left, both would occupy
$v_i$ at time $1$, which is impossible.
Thus at most one leaves.

On the other hand, both cannot stay.
The selector at $s_i$ must move along one of $P_i^+$ and $P_i^-$ and
reaches the corresponding literal gate after $K$ steps.
If both gates were occupied permanently, no tight trajectory would be
available to the selector.
Therefore exactly one literal-gate agent leaves.

Define
\begin{equation}
    X_i=\mathsf{True}
    \quad\Longleftrightarrow\quad
    \text{the agent at $g_i^+$ leaves}.
    \label{eq:truth-assignment}
\end{equation}
Hence the vacated gate is exactly the gate corresponding to the true
literal of $X_i$.

We next show that the selector must complete at this vacated gate.
The opposite gate is permanently occupied, so the selector must follow
the path to the vacated gate and reaches it at time $K$.
If it did not complete there, some other agent would have to occupy
that level-zero goal permanently.
The original gate agent cannot return because every tight move
strictly increases potential, and no selector from another variable
can reach it.
A clause agent can reach the gate only at some time
$\tau_j<K$; if it completed there, goal staying would block the
selector from entering at time $K$.
Thus the selector completes at the vacated gate.

It follows that all $2n$ level-zero goals are occupied by the
stationary literal-gate agents and selectors.
The remaining agents, including every clause agent, must therefore
complete at level-two goals.

Consider the clause agent starting at $c_j$.
Since it starts at potential $-\tau_j$ and completes at potential
$2$, tightness requires exactly $\tau_j+2$ moves.
Its only potential-increasing choices from $c_j$ are the occurrence
paths $Q_{j,\ell}$ for $\ell\in C_j$.
It must therefore reach some gate $g(\ell)$ at time $\tau_j$.

If $\ell$ were false under \eqref{eq:truth-assignment}, then
$g(\ell)$ would be the gate occupied permanently by its original
literal-gate agent, so the clause agent could not enter it.
Hence $\ell$ must be true.

Every clause therefore contains a true literal, and the extracted
assignment satisfies $\Phi$.
\end{proof}

\begin{proof}[Proof of Theorem~\ref{thm:hardness}]
The reduction has polynomial size.
By Lemma~\ref{lem:complete}, a satisfiable formula yields an AMAPF
solution of SoC $B$.
By Lemma~\ref{lem:sound}, any AMAPF solution of SoC at most $B$ yields
a satisfying assignment.
Thus deciding whether the optimal SoC is at most $B$ is NP-hard, and so is computing a minimum-SoC AMAPF solution.
\end{proof}

The reduction relies only on the permanent occupancy of the
level-zero literal goals.
The shared bottleneck $v_i$ prevents both literal gates from being
vacated, while the selector prevents both from remaining occupied.
Together they force a binary choice.
Anonymous reassignment among the level-two goals is immaterial: the
soundness argument requires only that each clause agent pass through a
gate belonging to its own clause before reaching level two.

\section{Discussion and Conclusion}
\label{sec:conclusion}

We showed that minimizing SoC in standard goal-staying AMAPF is NP-hard,
closing the complexity gap between standard AMAPF and its
polynomial-time flow-based variants.
Our analysis identifies permanent goal occupancy as the key source of
difficulty: adding goal-settlement constraints to the standard
time-expanded flow formulation destroys its integrality. Together with the polynomial-time solvability of the disappearing variant, our result establishes a sharp complexity boundary determined solely by whether completed agents remain at their goals.

Our reduction is also flexible. Its main ingredients are a tight potential lower bound, a shared bottleneck enforcing a binary choice, and permanently occupied literal goals communicating this choice to clause agents. The clause paths are otherwise private, and their different lengths serve only to separate clause agents in time. Thus the construction does not require bounded variable occurrence and can potentially be adapted to other NP-complete SAT variants.
Such adaptations may yield stronger results. With bounded-occurrence SAT, each variable has only constantly many clause connections, and the polynomial path lengths in our construction can also be replaced by constant ones. Specifically, the values $\tau_j=1,\ldots,m$ are used only to separate clauses sharing a variable, while $K=m+1$ places the selector after all such clauses. Under bounded occurrence, a constant number of time offsets suffices, for example by coloring the conflict graph of clauses sharing a variable. Hence both $\tau_j$ and $K$ can be bounded by constants, giving constant-length clause and selector paths and potentially enabling bounded-degree reductions. Planar SAT variants may similarly lead to hardness on planar graphs, and suitable embedding gadgets may further extend the result to 2D grids. Finally, gap-preserving reductions from suitable SAT or Max-SAT variants may yield inapproximability results for goal-staying AMAPF.

\bibliographystyle{IEEEtran}
\bibliography{references}

\end{document}